%
%
%
%

\documentclass{acm_proc_article-sp}

\usepackage{algorithmic}
\usepackage{algorithm}

\usepackage{amssymb}
\setcounter{tocdepth}{3}
\usepackage{graphicx}
\usepackage{url}

\usepackage{amsmath}
\usepackage{amssymb}
\usepackage{amsfonts}
\usepackage{float}
\usepackage{graphics}
\usepackage{theorem}
\usepackage{euscript}
\usepackage{psfrag}
\usepackage{bbm}
\usepackage{todonotes}

\usepackage{flushend}


\newcommand{\set}[1]{\left\{#1\right\}}

\newcommand{\eps}{\varepsilon}

\newtheorem{thm}{Theorem}

\newtheorem{lem}[thm]{Lemma}
\newtheorem{prop}[thm]{Proposition}
\newtheorem{defn}{Definition}

\def\E{{\mathbb E}}        


\def\secSpace{0cm}
\def\parSpace{-0.08cm}

\newcommand{\comment}[1]{}

\usepackage{etoolbox} \makeatletter \patchcmd{\maketitle}{\@copyrightspace}{}{}{} \makeatother 

\begin{document}

\title{Budget-Constrained Item Cold-Start Handling in Collaborative Filtering Recommenders via Optimal Design }

%
%
%
%
%

\numberofauthors{6} 
%
\author{
%
%
\alignauthor
Oren Anava$^*$\\
       \affaddr{Technion, Haifa, Israel}\\
       \email{oanava@tx.technion.ac.il}
\alignauthor
Shahar Golan\\
       \affaddr{Yahoo Labs, Haifa, Israel}\\
       \email{shaharg@yahoo-inc.com}
\alignauthor Nadav Golbandi\\
       \affaddr{Yahoo Labs, Haifa, Israel}\\
       \email{nadavg@yahoo-inc.com}
\and  
\alignauthor Zohar Karnin\\
       \affaddr{Yahoo Labs, Haifa, Israel}\\
       \email{zkarnin@yahoo-inc.com}
\alignauthor Ronny Lempel\thanks{This work was done while the authors were at Yahoo Labs.} \\
       \affaddr{Outbrain Inc., Netanya, Israel}\\
       \email{rlempel@outbrain.com}
\alignauthor Oleg Rokhlenko\\
       \affaddr{Yahoo Labs, Haifa, Israel}\\
       \email{olegro@yahoo-inc.com}
\and
\alignauthor Oren Somekh\\
      \affaddr{Yahoo Labs, Haifa, Israel}\\
     \email{orens@yahoo-inc.com}
}

\maketitle
\begin{abstract}
It is well known that collaborative filtering (CF) based recommender systems provide better modeling of users and items associated with considerable rating history. The lack of historical ratings results in the user and the item cold-start problems. The latter is the main focus of this work. Most of the current literature addresses this problem by integrating content-based recommendation techniques to model the new item. However, in many cases such content is not available, and the question arises is whether this problem can be mitigated using  CF techniques only. We formalize this problem as an optimization problem: given a new item, a pool of available users, and a budget constraint, select which users to assign with the task of rating the new item in order to minimize the prediction error of our model. We show that the objective function is monotone-supermodular, and propose efficient optimal design based algorithms that attain an approximation to its optimum. Our findings are verified by an empirical study using the Netflix dataset, where the proposed algorithms outperform several baselines for the problem at hand. 
\end{abstract}

\vspace{-0.2cm}
\category{H.2.8}{Database Management}{Database Applications}[Data Mining]
\vspace{-0.2cm}
\terms{Algorithms, Experimentation}
\vspace{-0.2cm}
\keywords{collaborative filtering, item cold-start, optimal design} 

\section{Introduction}
\label{intro}
Recommendation technologies are increasingly being used to route relevant and enjoyable information to users. Whether they try to navigate through overwhelming Web content, choose a restaurant, or simply find a book to read, users  find themselves being guided by modern online services that use recommendation systems.  Usually, such services are based on users' feedback and stated preferences, and use editors, content analysis or wisdom of the crowds to tag their database items.

One of the most common and effective recommendation techniques is Collaborative filtering (CF). This technique relies only on past user behavior (e.g., previous transactions or feedback), and does not require the creation of explicit profiles. Moreover, it requires no domain knowledge or content analysis, and excels at exploiting popularity trends, which drive much of the observed user interaction. 
One of the fundamental problems arising when employing CF techniques is the \textit{cold-start} problem. Roughly speaking, the cold-start problem is caused by the system's incapability of dealing with new items or new users due to the lack of relevant transaction history. 

This work focuses on the item cold-start problem, \textit{without} assuming any availability of context nor content-based information. In particular, we consider the following setting: a publisher has a set of users, and is aware of those users' historical ratings of items (movies, books, etc.). Now the publisher is interested in evaluating a new item, on which she has no prior knowledge. At her disposal is a pool of available users, from which she can assign $B$ users to the task of rating the new item. After receiving their ratings, the publisher estimates, to the best extent possible, how the remaining users will rate the new item. 

In order to mitigate the reviewing period, the $B$ users are all selected at once - the publisher does not have the luxury to receive some ratings, and then to adaptively select additional users based on that signal. An important aspect of this problem setting is that the satisfaction of the assigned users does not factor into the equation - there is no cost associated with selecting a user who will dislike the item. The sole purpose is to select  users who will provide best characterization of the new item.

The rest of this paper is organized as follows: the remaining of this section presents informal statement of our results, and surveys the related work. Section~\ref{pre} introduces notations and concretely defines the problem. Section~\ref{optimal} then presents our optimal design based algorithms and analysis.
Experimental results are reported in Section~\ref{sec:results}. We conclude and discuss future work in Section~\ref{sec:conc}.

\subsection{Informal Statement of Results} \label{informal}
We adopt a common assumption in \textit{Matrix Factorization} (MF)-based CF \cite{Koren-Bell-Volinsky_IEEE-Comp-2009}:  ratings of items by users can be well approximated by the product of two low-dimensional matrices, one representing low-dimensional latent vectors of users, and the other holding latent representations of items. The problem at hand then translates to the following mathematical abstraction: given latent vectors representing each of the users, and a new item whose latent vector is unknown, choose which $B$ ratings to reveal so that the item's latent vector could be estimated most accurately.

The main contributions of this work are as follows: (1) We cast this budget-constrained user selection problem as an optimization problem, in which the objective function stands for the expected prediction error of our model. (2) We devise two novel optimal design based algorithms for the problem, each attains approximation to the optimum under certain assumptions. (3) We validate the theoretical results with an empirical study, by simulating the setting of the problem using the Netflix dataset and comparing the effectiveness of our algorithms to that of several baselines.
%

\vspace{\parSpace}

\subsection{Related Work}
\label{sec:related}


 The two major fields of CF are \textit{neighborhood methods} and \textit{Latent Factor Models} (LFM). 
%
%
 LFM characterizes both items and users as vectors in a space automatically inferred from observed data patterns. The latent space representation strives to capture the semantics of items and users, which drive most of their observed interactions. One of the most successful realizations of LFM, which combines good scalability with predictive accuracy, is based on low-rank MF  (e.g., see \cite{Koren-Bell-Volinsky_IEEE-Comp-2009}). In particular, low-rank MF provides a substantial expressive power that allows modeling specific data characteristics such as temporal effects \cite{koren_CommACM2010}, item taxonomy \cite{Dror-Koenigstein-Koren_RecSys2011}, and attributes \cite{Agarwal-Chen_KDD2009}.

One of the inherent limitations of MF is the need for historical user-item interactions. When such history is limited, MF-based recommenders cannot reliably model new entities, leading to the item and user cold-start problems. Despite the fact that users and items are usually represented similarly in the latent space, these two problems are essentially different due to the ability  to interview new users when joining the service in order to bootstrap their modeling. Moreover, each of these problems has a different line of previous works, as surveyed below.\\


\textit{\textbf{User cold-start problem.}}
Modeling the preferences of new users can be done most effectively by asking them to rate several carefully selected items of a seed set during a short interview \cite{Kohrs-Merialdo_ICMF2001,Rashid-Albert-Cosley-Lam-McNee-Konstan-Riedl_IUI2002,Rashid-Karypis-Riedl_KDD2008,Golbandi-Koren-Lempel_CIKM2010}. Item seed sets were constructed according to various criteria such as \textit{popularity} (items should be known to the users), \textit{contention} (items should be indicative of users' tendencies), and \textit{coverage} (items should possess predictive power on other items).

The importance of adaptive interviewing process using adaptive seed sets was already recognized in \cite{Rashid-Albert-Cosley-Lam-McNee-Konstan-Riedl_IUI2002}. Adaptive interviewing is commonly implemented by decision trees algorithms \cite{Rashid-Karypis-Riedl_KDD2008,Lee_ESA2010,Golbandi-Koren-Lempel_WSDM2011}. 
A method for solving the problem of initial interview construction within the context of learning user and item profiles is presented in \cite{Zhou-Yang-Zha_SIGIR2011}.\\

\textit{\textbf{Item cold-start problem.}}
To mitigate the item-cold problem of CF, a common practice involves utilizing external content on top of users' feedback. The basic approach is to leverage items' attributes and combine them with the CF model in a way that allows treating new items
\cite{Agarwal-Chen_KDD2009,Gunawardana-Meek_RECSYS2008,Gunawardana-Meek_RECSYS2009,Park-Chu_RECSYS2009}. In \cite{Agarwal-Chen_KDD2009} the authors proposed a regression-based latent factor model for cold-start recommendation. 
In their model a dyadic response matrix $Y$ is estimated by a latent factor model $Y \approx U^\top V$, where the latent factor matrices, $U$ and $V$, are estimated by regression $U \approx FX$ and $V\approx MZ$. The matrices $X$ and $Z$ are the user attribute and item feature matrices, and $F$ and $M$ are weight matrices learnt by regression. In a later work \cite{Park-Chu_RECSYS2009}, the authors improve the performance achieved in \cite{Agarwal-Chen_KDD2009} by solving a convex optimization problem that estimates the weight matrices, instead of computing the low-rank matrix decomposition of \cite{Agarwal-Chen_KDD2009}.
Another approach to tackle the item cold-start issue by combining content information with CF,  based on \textit{Boltzmann machines},  is studied in \cite{Gunawardana-Meek_RECSYS2008,Gunawardana-Meek_RECSYS2009}. 

The  works described above require content or context data for new items. However, such data may not be available. In such cases, ratings for new items are arriving and one wishes to predict the missing ratings, with increasing accuracy, as more feedback is provided. Due to the incremental nature of the problem, one cannot afford to retrain the CF model with each arriving rating. A common heuristic approach is to estimate a new item's latent vector by a linear combination of the latent vectors of its raters and their respective ratings. It is based on the fact that users and items coexist as vectors in the same latent space and that ratings are usually estimated by user-item vector similarities. This principle was successfully practiced in various MF techniques \cite{Paterek_KDD2007,Koren_KDD2008,Aharon-Kagian-Koren-Lempel_RecSys2012,Aizenberg-Koren-Somekh_WWW2012}.
We note that this approach will serve us as baseline, as it does not require any content-based information and is thus suitable for our setting. \\

\textit{\textbf{The optimal design approach.}}
In the  statistical field called {\it optimal design of experiments}, or just {\it optimal design} \cite{atkinson1992optimum,Wu78}, a statistician is faced with the task of choosing a limited amount of experiments to perform from a given pool, with the goal of gaining the most informative results. There are various measurements for the gain obtained from a subset of experiments. The measure of interest in our setting is referred to as the $A$-optimality criterion. According to this criterion, the experiments correspond to noisy outputs of a (usually linear) regression function and the goal is to minimize the Mean Squared Error (MSE) over all the possible inputs of the regressor. 


An early attempt to tackle the user cold-start problem using optimal design techniques and an interviewing process appears in \cite{Yu-Bi-TrespI_CML2006}.
In this work, an adaptation of $A$-optimality, also referred to as \emph{transductive optimal design}, aims to minimize the MSE over some arbitrary (yet known in advance) set of points (or users, in our context).
Despite the original presentation in the setting of the user cold-start problem, the generality of the employed techniques allows handling the item cold-start problem in a similar manner.
 However, their work differs from ours due to the absence of performance guarantee for the proposed algorithms, and the use of an inferior prediction model.

\section{Preliminaries and Model}
\label{pre}

%



\subsection{The Latent Factor Model}
We denote by  $\mathcal{I}$ the set of all items, and by $\mathcal{U}$ the set of all users. The cardinality of these sets is denoted by  $ \left| \mathcal{I} \right| = m$, and $ \left| \mathcal{U} \right| = n$. We use $\mathcal{R}$ to denote the rating matrix, which is of size $n \times m$. A rating $r_{ui}$ indicates the preference of item $i$ by user $u$, where high values mean stronger preference. The matrix $\mathcal{R}$  is typically sparse, since most users rate just a small portion of the items.
%
%
%

Using the notation above, the LFM (with $k$ denoting its dimension) seeks to estimate each rating ${r}_{ui}$ as follows:
\begin{equation}
{r}_{ui} \approx \mu + b_i + b_u +  Q_i ^\top P_u \  ,
\label{eq:LFM}
\end{equation}
where $b_i \in  \mathbb{R}$ and $Q_i \in  \mathbb{R}^k$ are the bias and the latent factor vector of item $i$, respectively; $b_u  \in  \mathbb{R}$ and $P_u  \in  \mathbb{R}^k$ are the bias and the latent factor vector of user $u$, respectively; and $\mu \in \mathbb{R}$ denotes the overall average rating in $\mathcal{R}$. We use $\varepsilon_{ui}$ to denote the residual error of our model with respect to user $u$ and item $i$.

Intuitively, the term $Q_i ^\top P_u$ captures the interaction (or affinity) between user $u$ and item $i$, where high values implies stronger preference and vice versa. We denote by $Q$ the $k \times m$ matrix, whose columns correspond to the vectors $Q_i$ for $i=1,\ldots,m$, and by $P$ the $k \times n$ matrix, whose columns correspond to the vectors $P_u$ for $u=1,\ldots,n$. 

An important aspect of our work is the following: as training the LFM is a completely orthogonal task to ours, we assume that we are given an already trained model and refer to it as ground truth. Practically, we assume that  each rating ${r}_{ui}$ complies with the following noisy model:
\begin{equation}
{r}_{ui} = \mu + b_i + b_u +  Q_i ^\top P_u + \varepsilon_{ui} \ ,
\label{eq:LFM_ground}
\end{equation}
where $\varepsilon_{ui}$ is assumed to be a zero-mean noise term. This assumption is very common and can be found in several previous works (see \cite{Yu-Bi-TrespI_CML2006} for instance). Moreover, a simple sanity check can verify that the assumption complies with the employed LFM: test the significance of the residuals as generated from a Gaussian distribution. 


\subsection{Problem Definition}
We proceed to formally define our problem. Let $i$ be a new item ($i \not\in \mathcal{I}$), and denote by $\mathcal{U}^i \subset \mathcal{U}$ the pool of available users to rate it.
 Also, we denote the budget constraint by $B$ (i.e., $B$  is the number of users we are allowed to assign with the task of rating item $i$), and use the notation $\mathcal{U}^i_B $ for subsets of $\mathcal{U}^i$, which are of size $B$.



%

Now, given a budget constraint $B$, our goal is to select $B$ users to rate  item $i$ in order to minimize the expected MSE on the set of the remaining users $\mathcal{U}  \setminus \mathcal{U}^i_{B}$. More formally, we wish to solve the following optimization problem:
\begin{equation*} 
\min_{ \mathcal{U}^i_{B} \subset  \mathcal{U}^i} \left\{ \mathbb{E} \bigg[  \frac{1}{\left| \mathcal{U} \setminus \mathcal{U}^i_{B} \right|}   \sum_{u \in  \mathcal{U}  \setminus \mathcal{U}^i_{B} } \left( \tilde{r}_{ui} - {r}_{ui} \right)^2 \bigg] \right\} ,
 \end{equation*}
 where $ \tilde{r}_{ui}$ denotes our prediction of $ {r}_{ui}$, and the expectation is taken over the noise terms consisting the ratings $\set{{r}_{ui}}$ (as defined in Equation~\eqref{eq:LFM_ground}).

 To simplify the notations, in the sequel we refer to the set of the remaining users as $\mathcal{U}$ (instead of $\mathcal{U} \setminus \mathcal{U}_B^i$). 
 Essentially,  this means that the sets $\mathcal{U}$ and $\mathcal{U}_B^i$ are disjoint, and the problem of interest is thus translated to:
 \begin{equation} \label{prob_def}
\min_{ \mathcal{U}^i_{B} \subset  \mathcal{U}^i} \left\{ \mathbb{E} \bigg[  \frac{1}{\left| \mathcal{U} \right|}   \sum_{u \in  \mathcal{U} } \left( \tilde{r}_{ui} - {r}_{ui} \right)^2 \bigg] \right\} .
 \end{equation}
Clearly, the objective function inherently depends on how we generate the predictions $\set{\tilde{r}_{ui}}$ given the set of  $B$ users 
$\mathcal{U}^i_{B}$, and their ratings of item $i$.
Thus we can divide the problem in Equation \eqref{prob_def} into two distinct problems:
 
\begin{description}
\item[Rating Prediction (Problem A):] given a subset of users $\mathcal{U}^i_B \subset \mathcal{U}^i$ and their ratings of item $i$, generate predictions $\set{\tilde{r}_{ui}}$ for all $ u \in \mathcal{U}$.
\item[User Selection (Problem B):] given a pool of available users $\mathcal{U}^i$, select a subset of users $\mathcal{U}^i_B \subset \mathcal{U}^i$ to reveal their ranking of item $i$.
\end{description}

We shall point out that viewing the latent factor model as ground truth gives rise to an optimal solution for \textbf{Problem A}: least squares estimation (detailed in Section~\ref{optimal}). However, in the CF literature there exist other approaches that can potentially function better in practice (especially if the data is not well represented by the LFM). These approaches are presented in Section~\ref{baseline}, and will serve us as baselines for evaluating the merit of our approach.

%
%

\section{Algorithms and Analysis}
\label{optimal}

In this section we tie together the two problems described in Section \ref{pre}, and present a holistic approach that tackles the unified problem as a whole - the optimal design approach. \textit{Optimal design} is a statistic paradigm  \cite{atkinson1992optimum,Wu78} that is aimed at selecting which experiments to conduct out of a given pool in order to maximize the result accuracy. In our setting, we wish to select a subset of users such that the prediction error of our model is minimized. Roughly speaking, the optimal design approach seeks to estimate the MSE (over the entire user set) for each selection of $B$ users, \emph{without} exposing their actual ratings. This way, we can evaluate the merit of any given subset of users (in terms of MSE), and make the selection accordingly.

Basically, the optimal design approach relies on the assumption that the ratings are generated via Equation~\eqref{eq:LFM_ground}, and consequently on the optimal solution for \textbf{Problem A}: least squares estimation. To make this point clearer we first introduce this estimator.

\subsection{Least Squares Estimation}
Assume we are given a new item $i \not\in \mathcal{I}$, and a subset of 
users   $\mathcal{U}^i_B \subset \mathcal{U}^i$ who provided their ratings for this item. Our task is to  predict $\tilde{r}_{ui}$ for all $u \in \mathcal{U}$. 
Recall our modeling assumption regarding each of ratings $r_{ui}$ from Equation \eqref{eq:LFM_ground}.
In particular, we assume that the ratings of the new item $i$ by the users within $\mathcal{U}^i_B $ follow this model as well, where  $ \mu $, $ b_u  $ and $  P_u  $ are given by our prediction model (Section \ref{pre}), whereas $ b_i  $ and $ Q_i $ are unknown to us (since the new item was not considered while training the model). 
Therefore, the least squares estimator seeks to estimate $ {b}_i $ and $ {Q}_i $, and consequently to generate a prediction
$$\tilde{r}_{ui} =  \mu + \tilde{b}_i  + b_u+  \tilde{Q}_i  P_u .$$
From now on, we denote by $ ( {b}_i , {Q}_i ) $ the $(k+1)$-dimensional concatenation of $b_i$ and the vector $Q_i$; the notation $ ( \tilde{b}_i , \tilde{Q}_i ) $ shall stand for the corresponding estimator.  

Using the notation above, the least squares estimator aims to estimate $( b_i , Q_i  ) $ by minimizing the MSE over the set $\mathcal{U}^i_B$, that is, to solve the following problem:
\begin{equation*} 
\min_{ \substack{b \in \mathbb{R} \\\ q \in  \mathbb{R}^k }}  \left\{ \frac{1}{ \left|  \mathcal{U}^i_B  \right|  }   \sum_{ v \in \mathcal{U}^i_B } \left(  r_{vi} - q^\top P_v- b - b_v - \mu \right)^2     
\right\}  .
\end{equation*}
This minimization problem is in fact analytically solvable, and yields the following solution:
\begin{equation*} 
\big(  \tilde{b}_i , \tilde{Q}_i \big) 
= \left(  \sum_{ v \in \mathcal{U}^i_B } {P}^\prime_v {P}_v^{\prime \top} \right)^{-1} \left( \sum_{ v \in \mathcal{U}^i_B } (r_{vi} - b_v - \mu )   {P}_v^{\prime} \right)   ,
\end{equation*}
where $   {P}^\prime_v $ is the concatenated column vector $  ( 1 ,  {P}_v ) $.

The above might  not be well-defined since $\sum_{ v \in \mathcal{U}^i_B } {P}^\prime_v {P}_v^{\prime \top}$ need not be invertible in general. In practice,  a regularization term of the form $\lambda \left( \| q \|_2^2 + b^2 \right)$ is usually added to the objective function to avoid this problem. The estimator $ ( \tilde{b}_i , \tilde{Q}_i  )$  then takes the following form:
 \begin{equation*} 
 \left( \lambda   I + \sum_{v \in \mathcal{U}^i_B } {P}^\prime_v {P}_v^{\prime \top} \right)^{-1}   \left( \sum_{v \in \mathcal{U}^i_B} (r_{vi} - b_v - \mu )   {P}_v^{\prime} \right) .
\end{equation*}
In the sequel, we will assume that $\sum_{ v \in \mathcal{U}^i_B } {P}^\prime_v {P}_v^{\prime \top}$ is invertible to ease the readability of the paper. We emphasize that all the  results presented in this paper would still hold for the regularized estimator.

\subsection{The Optimal Design Approach}
We proceed to formally define necessary notations that will help us adapting the optimal design approach to our setting.
We somewhat abuse notations and denote by $ P $ the matrix whose columns correspond to the latent factor vectors $ P_u^\prime $ for $ u \in \mathcal{U}  $, and by ${P}_B$ the matrix whose columns correspond to the latent factor vectors $ P_v ^\prime$ for $ v \in \mathcal{U}^i_B  $.
  Similarly, we use the notations $\varepsilon_{\mathcal{U}} $ and $\varepsilon_B$ for vectors whose elements correspond to $ \varepsilon_{ui} $ for $ u \in \mathcal{U}$ and $ \varepsilon_{vi} $ for $ v \in \mathcal{U}_{B}^i $, respectively. Finally, we use $r_B$ to denote the vector whose elements correspond to $(r_{vi} - b_v - \mu)$ for $ v \in \mathcal{U}_{B}^i $.

%
 We divide the analysis into two cases: (1) the noise terms $\set{\varepsilon_{ui}}$ are assumed to be zero-mean and i.i.d.; and (2) the noise terms $\set{\varepsilon_{ui}}$ are only assumed to be zero-mean and independent (but not necessarily identically distributed).

\subsubsection{Identically Distributed Noise Terms} \label{iid}
The following is our key observation: the optimization problem considered in Equation \eqref{prob_def} can be reduced to a simpler problem, if the noise terms are assumed to be i.i.d. and the least squares estimator is used for estimating  $ ( b_i , Q_i ) $. This observation is stated and proven below in Lemma \ref{TRACE}.

Before we continue, we will assume without loss of generality that $ P P ^\top = \left| \mathcal{U} \right|   I_{(k+1) \times (k+1)}$, meaning that the user vectors are in isotropic position. Otherwise , we can simply apply an invertible linear transformation to the LFM space of the users and its inverse to the space of the items; this does not affect our results as $(F ^\top x)^\top (F^{-1}y) =  x^\top y $ for all vectors $x,y$ and invertible linear transformation $F$. 
This assumption is here merely to simplify the statements and proofs; the statements hold without it as well, and in particular, one is not required to compute the inverse of $P P ^\top$ when implementing the algorithm. 

Additionally, we assume that for any $ \mathcal{U}^i_{B}  \subset  \mathcal{U}^i$, it holds that the matrix $P_B P_B^{\top}$ is diagonal. While seeming like a non-realistic assumption, it captures the nature of the problem and facilitates the analysis. Concretely, this assumption will later be used to prove that our objective is supermodular (to be later defined). The problem can be solved without this assumption by using weak supermodularity property, as proposed in \cite{boutsidis2015greedy}.

%

 \begin{lem} \label{TRACE}
Assume there exist $ b_i $ and  $  Q_i  $, such that for every $u \in \mathcal{U}$ it holds that  $r_{ui} =  \mu + b_u + b_i  +  Q_i^{ \top} P_u  + \varepsilon_{ui}$, where $ \mathbb{E} \left[ \varepsilon_{ui} \right] =0 $ and $ \mathbb{E} \left[ \varepsilon_{ui}^2 \right] = \sigma^2 $. Then, the following is equivalent\footnote{The equivalence here between the two problems is in the following sense: the value of both objective functions is the same for any subset $ \mathcal{U}^i_{B}  \subset  \mathcal{U}^i$.}   to the problem presented in Equation \eqref{prob_def}:
$$
 \min_{ \mathcal{U}^i_{B} \subset  \mathcal{U}^i}  \left\{    \sigma^2    \text{tr} \Big( \left( P_B P_B^{\top} \right)^{-1}  \Big) +\sigma^2   \right\},
 $$
 if the least squares estimator is considered.
 \end{lem}

The expression in the lemma above  has the following interpretation: the additive $\sigma^2$ term is the inherent model error. I.e., this is the term that cannot be avoided as long as we assume a latent factor model of dimension $k$. The second term of $\sigma^2    \text{tr} \big( \big( P_B P_B^{\top} \big)^{-1}  \big)$ represents the error originating from a sub-optimal choice of the item's parameters, i.e.,\ from the distance between $( \tilde{b}_i , \tilde{Q}_i  ) $ and $( b_i , Q_i ) $.  
The proof of the lemma appears in Appendix \ref{app: proof of lemma 1}.\\

After establishing a new (and equivalent) optimization problem, we turn now to present a simple greedy algorithm for finding an approximation to its optimum. Here, $P_{B \setminus v_j} $ refers to the matrix $P_B$, where the column that  corresponds to the user ${v_j}$ is removed.
\begin{algorithm}[tb]
\caption{Backward Greedy Selection (BGS1)}
\label{alg2}
\begin{algorithmic}[1]
\STATE Input: users set $\mathcal{U}^i$, and corresponding matrix $P_{\mathcal{U}^i}$.
\STATE Output: users subset ${\mathcal{U}_B^{ALG}}$.
\STATE Initialize $P_{B}=P_{\mathcal{U}^i}$ and $\mathcal{U}_B^{ALG} =\mathcal{U}^i$.
\FOR {$j=1$ to $ | \mathcal{U}^i | - B$}
\STATE 
$v_j \leftarrow \arg\min_{v \in \mathcal{U}_B^{ALG} } \left\{  \text{tr} \left( \left( P_{B \setminus v}  P_{B \setminus v} ^{ \top} \right)^{-1}   \right)  \right\} 
$
\STATE Update $P_{B} \leftarrow P_{B \setminus v_j} $.
\STATE Update $\mathcal{U}_B^{ALG} \leftarrow \mathcal{U}_B^{ALG} \setminus v_j $.
\ENDFOR
\end{algorithmic}
\end{algorithm}
Before stating our main theorem, we introduce some necessary definitions:
\begin{defn}
let $f:\mathbb{R}_{+} \to \mathbb{R}$. For a positive definite matrix $A \in R^{k \times k}$ with the decomposition
$ A = U^\top \text{diag}(\lambda_1,\ldots,\lambda_k) U $,
we extend the definition of  $f$ as follows:
$$ f(A) = U^\top \text{diag}(f(\lambda_1),\ldots,f(\lambda_k)) U .$$
\end{defn}
\begin{defn}
We say that $f$ is \emph{operator monotone} if for any positive definite matrices $A$ and $B$:
$$  
A \preceq B \Rightarrow f(A) \preceq f (B) .
$$
\end{defn}
\begin{defn}
We say that $f$ is \emph{operator antitone} if $-f$ is operator monotone.
\end{defn}
\begin{defn}
We say that $f:2^{E} \rightarrow \mathbb{R} $ is \emph{supermodular} if 
$$ f(A \cup \left\{x\right\} ) - f(A) \leq  f(B \cup \left\{x\right\} ) - f(B) ,$$
for any $A \subset B \subset E$ and $x \in E \setminus B$.
\end{defn}
\begin{defn}
The steepness of a decreasing supermodular function $f$ is denoted by $s$ and defined as follows: 
$$
s= \max_{x \in E} \frac{  \left[  f(\phi)-f(x) \right] -  \left[ f(E \setminus \left\{x\right\} ) - f(E) \right]  }{  \left[  f(\phi)-f(x) \right]  } .
$$
\end{defn}
In the last definition, if $ f(\phi)$ is not defined we can simply extend it as follows:
\begin{equation} \label{extension}
 f(\phi) = \max_{ \substack{A,B \subset E \\\ A \cap B = \phi }}  \left\{ f(A) + f(B)  -  f(A \cup B)  \right\} 
\end{equation}

%
%


\noindent The following is our main theorem:
\begin{thm} \label{GREEDY}
Let $i$ be a new item, and $\mathcal{U}^i $ its corresponding set of available users, and assume that  $ \mathbb{E} \left[ \varepsilon_{ui} \right] =0 $ and $ \mathbb{E} \left[ \varepsilon_{ui}^2 \right] = \sigma^2 $ for all $u \in \mathcal{U}$. Then, Algorithm \ref{alg2} generates a subset $\mathcal{U}^{ALG}_B \subset  \mathcal{U}^i$ for which:
\begin{equation*} \label{approx}
\E \left[ \text{MSE} \big(  \mathcal{U}^{ALG}_B \big) \right] \leq   \sigma^2 + \frac{ e^t - 1  }{t}   \left( \E \left[  \text{MSE} \big(  \mathcal{U}^{*}_B \big) \right] - \sigma^2 \right)   .
\end{equation*}
where $t=\frac{s}{1-s}$ and ${\cal U}_B^*$ is the optimal subset of $B$ users, minimizing the expected MSE.
\end{thm}

The above statement can be interpreted as follows. Both Algorithm \ref{alg2} and the optimal solution have an additive term of $\sigma^2$, corresponding to the inherent LFM error. The additional term, corresponding to the sub-optimality of the choice of $( \tilde{b}_i , \tilde{Q}_i  ) $, is multiplicatively approximated by a factor of $(e^t-1)/t$ compared to the optimal selection. Notice that the stated guarantee strictly dominates the following one: 
$$\E \left[ \text{MSE} \big(  \mathcal{U}^{ALG}_B \big) \right] \leq  \frac{ e^t - 1  }{t}   \E \left[  \text{MSE} \big(  \mathcal{U}^{*}_B \big) \right] ,$$
since $\sigma^2 >0$ (trivially holds). We shall proceed to prove the main theorem.

\begin{proof}
We first define an auxiliary function $\Phi : 2 ^{\mathcal{U}^i } \rightarrow \mathbb{R}$ as follows:
$$ \Phi ( \mathcal{U}^i_{D} ) = \text{tr} \left(\left( P_D P_D^{ \top} \right)^{-1}  \right) - \varphi_i ,$$ where $\varphi_i =  \text{tr} \big(\left( P_{\mathcal{U}^i} P_{\mathcal{U}^i}^{ \top} \right)^{-1}  \big)$, and  $ \Phi ( \varnothing )$ is defined as in Equation~\eqref{extension}. 
Notice that $\Phi$ is well-defined for any subset $\mathcal{U}_D^i \subset \mathcal{U}^i$ (of size $0\leq D \leq  \left| \mathcal{U}^i \right| $), and not only for subsets of size $B$. The matrix ${P}_D$ is then defined as the matrix whose columns correspond to the latent factor vectors $ P_v ^\prime$ for $ v \in \mathcal{U}^i_D  $.

Next, we show that $ \Phi $ has the following three properties: it is supermodular, monotonically  decreasing, and equal to 0 at $ \mathcal{U}^i $. 
For such functions, \cite{Ilev01} proved that the backward greedy algorithm, eliminating elements one by one, attains an $\big( \frac{ e^t - 1  }{t} \big)$-approximation to the optimum. 
The third property follows immediately from the definitions of $\Phi$ and $\varphi_i$, and thus we are left to prove the other two properties.

Before proving these properties, we provide a simple intuition. 
Notice first that $ \E \left[ \text{MSE} (\mathcal{U}_D^i ) \right] $ can be written as a linear function of $\Phi (\mathcal{U}_D^i )$:
$$ \E \left[ \text{MSE} (\mathcal{U}_D^i ) \right] = \sigma^2    \left( \Phi (\mathcal{U}_D^i ) + \varphi_i + 1 \right).$$
Thus, minimizing $\Phi$ is equivalent to minimizing the MSE. Now, the monotonicity of $\Phi$ gives rise to the following insight: the MSE decreases as the number of users increases. The supermodularity then translates into the following rational: the marginal return of a user diminishes as the set of selected users expands.

We shall proceed to formally prove these properties.
For the monotonicity, we take two subsets $\mathcal{U}_D^i , \mathcal{U}_E^i \subset \mathcal{U}^i$, such that $\mathcal{U}_D^i \subset \mathcal{U}_E^i $, and prove that $ \Phi ( \mathcal{U}_D^i ) \geq \Phi ( \mathcal{U}_E^i )$.  Thus, denote by $P_D$ and $P_E$ the matrices that correspond to $\mathcal{U}_D^i$ and $\mathcal{U}_E^i$, respectively. Then, it holds that $ P_E P_E^\top \succeq P_D P_D^\top$, which also implies that 
$$ \left( P_E P_E^\top \right)^ {-1} \preceq \left( P_D P_D^\top \right)^ {-1} .$$
Now, since the trace of a positive semidefinite (PSD) matrix is the sum of its eigenvalues, the above also implies that 
$$ \text{tr} \left( \left( P_E P_E^\top \right)^ {-1} \right) \leq \text{tr} \left( \left( P_D P_D^\top \right)^ {-1} \right) ,$$
which proves that $\Phi$ is monotonically decreasing.


For the supermodularity, we use recent techniques presented in \cite{sagnol2013approximation}, and specifically the following proposition:
\begin{prop} 
Let  $\mathcal{U}_D^i  \subset  \mathcal{U}^i $, and $P_D$ be the matrix whose columns correspond to $P_v^{\prime}$ for $v \in \mathcal{U}_D^i $. Then, for $f:\mathbb{R}_{+} \to \mathbb{R}$ such that $f^\prime$ is operator monotone, it holds that 
$$ \Phi ( \mathcal{U}_D^i  ) = \text{tr} \left( f \big( P_D P_D^\top \big) \right) $$
is supermodular.
\end{prop}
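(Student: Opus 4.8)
The plan is to prove the proposition by isolating the effect of adding a single user and showing that this marginal effect depends monotonically on the current user set, with the operator monotonicity of $f^\prime$ supplying exactly the monotonicity that is needed. Throughout I would write $v=P_x^\prime$ for the column appended when a user $x$ is added and $M_D=P_D P_D^{\top}=\sum_{u\in\mathcal{U}_D^i}P_u^\prime P_u^{\prime\top}$, and I would assume $M_D\succ 0$ so that $f$ and $f^\prime$ act on its spectrum, which lies in $\mathbb{R}_+$; for a singular $M_D$ one replaces it by the regularized $\lambda I+M_D$ exactly as in the least-squares discussion, which merely shifts the spectrum into the domain of $f$. The goal is then to control the quantity $\Phi(\mathcal{U}_D^i)-\Phi(\mathcal{U}_D^i\cup\{x\})$ as a function of $\mathcal{U}_D^i$.

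First I would establish the trace–derivative identity: for a symmetric $H\succeq 0$ and $f\in C^1$, differentiating the eigenvalues of $M+tH$ through the spectral calculus gives
$$\frac{d}{dt}\,\text{Trace}\big(f(M+tH)\big)=\text{Trace}\big(f^\prime(M+tH)\,H\big),$$
so that integrating from $0$ to $1$ yields
$$\text{Trace}\big(f(M+H)\big)-\text{Trace}\big(f(M)\big)=\int_0^1\text{Trace}\big(f^\prime(M+tH)\,H\big)\,dt.$$
Specialising to the rank-one increment $H=vv^{\top}$ expresses the marginal of $\Phi$ in closed form,
$$\Phi(\mathcal{U}_D^i\cup\{x\})-\Phi(\mathcal{U}_D^i)=\int_0^1 v^{\top} f^\prime\big(M_D+t\,vv^{\top}\big)\,v\;dt.$$

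The heart of the argument is a monotonicity comparison. Taking $\mathcal{U}_A^i\subseteq\mathcal{U}_B^i$ with a common added element $x\notin\mathcal{U}_B^i$, we have $M_A\preceq M_B$ and hence $M_A+t\,vv^{\top}\preceq M_B+t\,vv^{\top}$ for every $t\in[0,1]$. Since $f^\prime$ is operator monotone, the Loewner order is preserved, $f^\prime(M_A+t\,vv^{\top})\preceq f^\prime(M_B+t\,vv^{\top})$, and evaluating this matrix inequality in the quadratic form along $v$ and integrating over $t$ shows that the marginal reduction $\Phi(\mathcal{U}_D^i)-\Phi(\mathcal{U}_D^i\cup\{x\})$ is non-increasing as the set $\mathcal{U}_D^i$ grows. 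This diminishing–returns statement is exactly the supermodularity asserted in the proposition, and it is precisely why the hypothesis is placed on $f^\prime$ rather than on $f$.

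The parts I expect to require the most care are technical rather than conceptual. One must justify the trace–derivative identity for merely $C^1$ functions (not power series) via the spectral decomposition, and one must ensure $M_D+t\,vv^{\top}$ stays in the domain of $f$ and $f^\prime$, which is what the positive-definiteness/regularization assumption buys. The genuinely substantive point, and the reason this is a ``technique from Sagnol,'' is the recognition that operator monotonicity of $f^\prime$ is the tight condition: through the Daleckii--Krein representation of the Fréchet derivative of $f^\prime$, whose entries are the divided differences $f^{\prime[1]}(\lambda_p,\lambda_q)$, operator monotonicity is equivalent to positive semidefiniteness of these Loewner matrices, and this is exactly what secures the quadratic-form inequality above for every spectrum that can arise from the matrices $M_D$.
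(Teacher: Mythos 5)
Your proof is correct, and it is essentially the proof the paper gestures at: the paper gives no self-contained argument, deferring to Corollary~2.4 of \cite{sagnol2013approximation} ``albeit plugging operator monotone derivative instead of operator antitone one,'' and your chain --- the identity
\begin{equation*}
\text{Trace}\bigl(f(M+vv^{\top})\bigr)-\text{Trace}\bigl(f(M)\bigr)=\int_0^1 v^{\top}f^{\prime}\bigl(M+t\,vv^{\top}\bigr)v\,dt
\end{equation*}
for the rank-one marginal, followed by operator monotonicity of $f^{\prime}$ under the Loewner order --- is exactly that technique, written out in full and with the domain and regularity caveats handled sensibly.

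One discrepancy must be flagged, although its source is the paper and not you. For nested sets $\mathcal{U}^i_D\subseteq\mathcal{U}^i_E$ your argument yields
\begin{equation*}
\Phi\bigl(\mathcal{U}^i_D\cup\{x\}\bigr)-\Phi\bigl(\mathcal{U}^i_D\bigr)\;\le\;\Phi\bigl(\mathcal{U}^i_E\cup\{x\}\bigr)-\Phi\bigl(\mathcal{U}^i_E\bigr),
\end{equation*}
i.e.\ supermodularity in the standard sense: since $\Phi$ is decreasing, the trace \emph{reduction} contributed by a new user shrinks as the set grows. The paper's displayed definition of ``supermodular,'' however, states the opposite inequality ($\geq$), which is the textbook definition of \emph{sub}modularity; read literally, that definition would make the proposition false ($f(x)=x^{2}$ has operator monotone derivative $f^{\prime}(x)=2x$, yet the marginal $2v^{\top}M_{D}v+\|v\|_{2}^{4}$ of $\text{Trace}\bigl((P_{D}P_{D}^{\top})^{2}\bigr)$ grows as the set grows). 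Your direction is the only one under which the proposition can be true, and it is also the one the paper actually needs: it matches the paper's own gloss (``the marginal return of a user diminishes as the set of selected users expands''), and it is what keeps the steepness $s$ in $[0,1)$, without which the $(e^{t}-1)/t$ guarantee of \cite{Ilev01} invoked in Theorem~3.2 is vacuous. So you proved the intended and correct statement; just do not assert, as you do, that your inequality ``is exactly the supermodularity asserted in the proposition'' without remarking that the paper's written definition has its inequality reversed.
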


In our case $f(x) = \frac{1}{x}$, and its derivative  $f^\prime (x) = - \frac{1}{x^2}$ is operator monotone for diagonal matrices, which implies that $ \Phi$ is supermodular.
The proof resembles the proof of Corollary 2.4 of  \cite{sagnol2013approximation}, albeit plugging operator monotone derivative instead of operator antitone one.

Finally, as mentioned before, these three properties yield an $\big( \frac{ e^t - 1  }{t} \big)$-approximation to the problem of minimizing $\Phi$. That is, Algorithm \ref{alg2} generates a subset $ \mathcal{U}_{B} ^{ALG}$ for which:
\begin{align*}
 \Phi \big( \mathcal{U}_{B} ^{ALG} \big) & \leq \left( \frac{ e^t - 1  }{t} \right)  \min_{ \mathcal{U}^i_{B} \subset  \mathcal{U}^i}   \Phi \big( \mathcal{U}_{B} ^{i} \big)  \\
 & =   \left( \frac{ e^t - 1  }{t} \right)  \min_{ \mathcal{U}^i_{B} \subset  \mathcal{U}^i} \left\{ \frac{ \E \left[ \text{MSE} (\mathcal{U}_B^{i}) \right] }{ \sigma^2 } - 1 - \varphi_i  \right\}  \\
 & \leq  \left( \frac{ e^t - 1  }{t} \right)   \min_{ \mathcal{U}^i_{B} \subset  \mathcal{U}^i} \left\{ \frac{ \E \left[ \text{MSE} (\mathcal{U}_B^{i}) \right] }{ \sigma^2 } -1 \right\}  - \varphi_i   \\
 & =   \left( \frac{ e^t - 1  }{t} \right)    \left( \frac{ \E \left[ \text{MSE} (\mathcal{U}_B^{*}) \right] }{ \sigma^2 } - 1 \right) - \varphi_i   .
\end{align*}
By substituting $\Phi (\mathcal{U}_B^{ALG} )  = \frac{ \E \left[ \text{MSE} (\mathcal{U}_B^{ALG}) \right] }{ \sigma^2 } - \varphi_i - 1 $ in the inequality above,
 we get the stated result.
%
\end{proof}

%
%
%
\subsubsection{Non-Identically Distributed Noise Terms} \label{niid}
The analysis in Section \ref{iid}  relies heavily on the assumption that $\set{\varepsilon_{ui}}$ are identically distributed. However,  it is sometimes reasonable to assume that each user has a different noise distribution with respect to the employed model. In such cases, the identical distribution assumption only weakens our model. In this section we extend our analysis to the case where each of the users has a different  (yet  known in advance) noise distribution. 

Thus, let $i$ be a new item, and $ \mathcal{U}_{B}^i \subset  \mathcal{U}^i $ a subset of users assigned for the task of rating it.
We start by introducing a new estimator for $ (b_i , Q_i ) $, which generalizes the least squares estimator:
$$ \big( \tilde{b}_i , \tilde{Q}_i  \big)  
 = \left( P_B C_B^{-2} P_B^{ \top} \right)^{-1}  P_B C_B^{-2}  r_B  , $$
where again, $ C_B $ is the square root of the covariance matrix that corresponds to $ \varepsilon_{vi} $ for $ v \in \mathcal{U}^i_{B} $. Intuitively, this estimator is a least squares estimator that considers variance-scaled users. In the sequel, we refer to this estimator as \emph{generalized least squares estimator}. The following lemma states that this is indeed an unbiased estimator for $ (b_i , Q_i ) $. The proof is technical and appears in Appendix \ref{app: proof of lemma 4}.
\begin{lem} \label{SECOND}
Assume there exist $ b_i $ and  $  Q_i  $, such that for every $u \in \mathcal{U} $ it holds that  $r_{ui} =  \mu + b_u + b_i  +  Q_i^{ \top} P_u  + \varepsilon_{ui}$, where $ \mathbb{E} \left[ \varepsilon_{ui} \right] =0 $ and $ \mathbb{E} \left[ \varepsilon_{ui}^2 \right] = \sigma_u^2 $. Then,
\begin{equation} \label{generalized}
 \big( \tilde{b}_i , \tilde{Q}_i  \big)  
 = \left( P_B C_B^{-2} P_B^{ \top} \right)^{-1}  P_B C_B^{-2}  r_B  
 \end{equation}
is an unbiased estimator for $ (b_i , Q_i ^\top)$. 
\end{lem}
We proceed to state the lemma that encapsulates our parallel key observation for the case of non-identically distributed noise terms; its proof appears in Appendix \ref{app: proof of lemma 5}.
 
 \begin{lem} \label{trace2}
Assume there exist $ b_i $ and  $  Q_i  $, such that for every $u \in \mathcal{U}$ it holds that  $r_{ui} =  \mu + b_u + b_i  +  Q_i^{ \top} P_u  + \varepsilon_{ui}$, where $ \mathbb{E} \left[ \varepsilon_{ui} \right] =0 $ and $ \mathbb{E} \left[ \varepsilon_{ui}^2 \right] = \sigma_u^2 $. Then, the following is equivalent  to the problem presented in Equation \eqref{prob_def}:
$$
 \min_{ \mathcal{U}^i_{B} \subset  \mathcal{U}^i}  \left\{  \text{tr} \left( \left( P_B C_B^{-2} P_B^{\top} \right)^{-1}  \right) +  \frac{1 }{ \left| \mathcal{U} \right| }  \sum_{u \in \mathcal{U} } \sigma_u^2 \right\},
 $$
 if the generalized least squares estimator is considered.
 \end{lem}

After establishing here also an equivalent optimization problem, we turn to present a simple adaptation of Algorithm \ref{alg2} to this problem. Here, the notation
$ C_{\mathcal{U}^i} $ refers to the square root of the covariance matrix of all available users, and the notations $P_{B \setminus v_j} $ and $C_{B \setminus v_j} $ are as defined in   Section \ref{iid}.
\begin{algorithm}[tb]
\caption{Backward Greedy Selection (BGS2)}
\label{alg3}
\begin{algorithmic}[1]
\STATE Input: users set $\mathcal{U}^i$, and corresponding matrix $P_{\mathcal{U}^i}$.
\STATE Output: users subset ${\mathcal{U}_B^{ALG}}$.
\STATE Initialize $P_{B}=P_{\mathcal{U}^i}$, $C_{B}=C_{\mathcal{U}^i}$, and $\mathcal{U}_B^{ALG} =\mathcal{U}^i$.
\FOR {$j=1$ to $ | \mathcal{U}^i | - B$}
\STATE 
$v_j \leftarrow \arg\min_{v \in \mathcal{U}_B^{ALG} } \Big\{  \text{tr} \Big( \left( P_{B \setminus v} C_{B \setminus v}^{-2} P_{B \setminus v} ^{ \top} \right)^{-1}   \Big)  \Big\} 
$
\STATE Update $P_{B} \leftarrow P_{B \setminus v_j} $ and $C_{B} \leftarrow C_{B \setminus v_j} $. 
\STATE Update $\mathcal{U}_B^{ALG} \leftarrow \mathcal{U}_B^{ALG} \setminus v_j $.
\ENDFOR
\end{algorithmic}
\end{algorithm}
For Algorithm \ref{alg3} we can prove the following:
\begin{thm} \label{greedy2}
Let $i$ be a new item, and $\mathcal{U}^i $ its corresponding set of available users, and assume that  $ \mathbb{E} \left[ \varepsilon_{ui} \right] =0 $ and $ \mathbb{E} \left[ \varepsilon_{ui}^2 \right] = \sigma_u^2 $ for all $u \in \mathcal{U}$. Then, Algorithm \ref{alg3} generates a subset $\mathcal{U}^{ALG}_B \subset  \mathcal{U}^i$, for which the following holds:
\begin{align*} \label{approx}
\E \left[ \text{MSE} \big(  \mathcal{U}^{ALG}_B \big) \right] &  \leq \frac{1 }{ \left| \mathcal{U} \right| }  \sum_{u \in \mathcal{U} } \sigma_u^2   \\
& +   \frac{e^t-1}{t} \left( \E \left[  \text{MSE} \big(  \mathcal{U}^{*}_B \big) \right] -  \frac{1 }{ \left| \mathcal{U} \right| }  \sum_{u \in \mathcal{U} } \sigma_u^2 \right).
\end{align*}
\end{thm}
The proof relies on the same techniques of Theorem \ref{GREEDY}, albeit plugging the variance-scaled matrix $ P_B C_B^{-1}$ instead of the matrix $P_B$, and is thus omitted here.\\

\textit{\textbf{Reduction to unknown variances.}}
Until now, we assumed that each user has its own noise variance with respect to the model. Our analysis relied on the fact that these variances are known to us in advance. Clearly, this is never the case in real life.  However, in the setting of CF-based recommender systems, we can utilize the past ratings of a given user to estimate this variance. Formally, let $\mathcal{R} (u)$ be the set of items that user $u$ has rated. Then, $\sigma_u^2$ can be estimated as follows:
$$
\tilde{\sigma}_u^2 = \frac{1}{ | \mathcal{R}(u)  | } \sum_{i \in \mathcal{R}(u) } ( \tilde{r}_{ui} - {r}_{ui} )^2 ,
$$
where $ \tilde{r}_{ui} $ is as predicted by the LFM (Equation~\eqref{eq:LFM}). 
In the experimental section, we show that despite the fact that the variances are estimated and not known, Algorithm \ref{alg3} outperforms all the baselines (including Algorithm \ref{alg2}).

\vspace{\secSpace}
\section{Experimental results}
\label{sec:results}

\subsection{Dataset and Model}
\label{sec:setup}

We consider a large movie ratings dataset released by Netflix as the basis of a well publicized competition \cite{Bennett-Lanning_KDD2007}. The original dataset contains more than $100$ million date-stamped ratings (integers between $1$ and $5$) collected between Nov. 11, 1999 and Dec. 31, 2005, from about $480,000$ anonymous Netflix customers on $17,770$ movies. 

The dataset is processed as follows: a new ratings matrix (denoted henceforth by $\mathcal{R}$) is constructed from $17,000$ arbitrarily chosen movies. The matrix $\mathcal{R}$ contains about $96$ million ratings from the same number of users as in the original dataset. The other 770 movies are used as ``new items'' in the various experiments we conduct\footnote{Note we cannot use one of the official Netflix test sets, as none of them captures the notion of new items.}. Note that since most users have not rated most of the movies, the ratings matrix $\mathcal{R}$ is very sparse (only $1\%$ of all ratings exist).

We factorize $\mathcal{R}$ using the LFM (Equation \eqref{eq:LFM}), with its dimension set to be $k=20$. We use the common Root Mean Squared Error (RMSE) metric to measure the prediction accuracy of the resulted model. For the task of minimizing this metric, we apply the Stochastic Gradient Descent (SGD) algorithm, where the learning rate is inspired by the AdaGrad algorithm of  \cite{duchi2011adaptive}. 

\subsection{Experimental Setup}
We now describe how we test algorithms for user selection in an offline manner on the Netflix dataset.
As mentioned before, a set of $770$ movies is withheld from our model. We henceforth denote this set by $\mathcal{N}$, and think of it as a set of movies that are new to our recommender system. 
For each movie $i \in \mathcal{N}$, we denote by $\mathcal{R}(i)$ the set of Netflix users who have actually rated this movie. We then adapt the dataset to our setting as follows:
\begin{description}
\item[The pool of available users:] we consider the set $\mathcal{R}(i)$ as the pool of available users to rate movie $i$, i.e., we set $\mathcal{U}_i = \mathcal{R}(i)$.
Note that indeed, $\mathcal{U}^i$ changes for every movie $i$, as the actual available ratings differ from one movie to another in the Netflix dataset. This coincides with the pool of available users changing in time.
\item[Selecting a subset of users:] the conceptual step of assigning a subset of users $\mathcal{U}^i_{B} \subset  \mathcal{U}^i$ to rate movie $i$ corresponds to the action of revealing their actual ratings in the dataset. 
\item[The set of all users:] due to the small portion of actual ratings in the dataset, we cannot measure the prediction error on all users. Thus,
after revealing the actual ratings of the subset $\mathcal{U}^i_{B} $, we use the remaining unrevealed ratings for this task, i.e., we set $\mathcal{U} = \mathcal{U}_i \setminus \mathcal{U}_B^i$.  Note that we {\em  essentially use the non-selected users for the evaluation task} in this step. 
\end{description}

Despite the fact that our analysis was carried out using the MSE metric, we measure the prediction error of the different approaches using the RMSE metric:
\begin{equation*}
\text{RMSE} \big( \mathcal{U}^i_B \big) =  \sqrt{ \frac{1}{\left| \mathcal{U} \right|} \sum_{u \in \mathcal{U}} \left(  {r}_{ui} - \tilde{r}_{ui} \right)^2 }  .
\end{equation*}
Since the root square is a monotone function, minimizing the MSE is equivalent to minimizing the RMSE (in the sense that both problems has the same minimizer). Thus, to be consistent with previous works, our experimental results are presented using the RMSE metric.

An important aspect of our experimental setup is the preselection of users who watched the ``new'' movie. This might seem unrealistic, since generally there is no guarantee that users will rate the new item (as we implicitly assume). We justify our setup as follows: each user we select to rate the new item is given some reward, which is (in most cases) sufficient to convince the user to provide rating. This mitigates the need to select ``frequent raters'', and highlights the rather interesting geometric aspect of the problem:  selected users should be  scattered (in a sense) in the LFM space.

\subsection{Baselines}
\label{baseline}


We turn to present several common and less common baselines from the recent literature. Recall that our problem (as defined in Equation~\eqref{prob_def}) was divided into two separate problems, and thus we consider two types of baselines. We begin with baselines for \textbf{Problem A}: rating prediction. \\

\textit{\textbf{Similarity based estimator.}}
Just like the least squares estimator, this baseline seeks to estimate $( b_i , Q_i ) $ and consequently generate a prediction $\tilde{r}_{ui} =  \mu + \tilde{b}_i  + b_u+  \tilde{Q}_i  P_u $.
 The estimation uses the assigned users who liked the item, and relies on 
the following underlying assumption: ``good'' user-item interaction is expressed by the closeness of the corresponding representative vectors in the LFM space. 
Thus, yielding the estimator:
 \begin{align*} 
 \tilde{b}_i & =   \frac{1}{| \mathcal{U}^i_B |}   \sum_{v \in  \mathcal{U}^i_B  } \left( r_{vi}  - b_v \right) - \mu \ , \\
\tilde{Q}_i & =   \frac{1}{ \left| \{ v \in \mathcal{U}^i_B | r_{vi} \geq \gamma \} \right| }   \sum_{v \in \mathcal{U}^i_B | r_{vi} \geq \gamma} P_v  \ ,
\end{align*}
where $\gamma$ is a predefined threshold (e.g., $4$ or $5$ for the Netflix dataset). We note that integrating negative feedback (low ratings) yields an empirically inferior similarity based estimator that is thus omitted here.
This approach was successfully practiced in \cite{Paterek_KDD2007,Koren_KDD2008,Aharon-Kagian-Koren-Lempel_RecSys2012,Aizenberg-Koren-Somekh_WWW2012}.\\
 
 \vspace{-0.14cm}

We continue with presenting baselines for \textbf{Problem B}: user selection.\\

\vspace{-0.14cm}

\textit{\textbf{Forward greedy selection (B1).}}
This baseline relies as well on the optimal design approach, and specifically on the A-optimality criterion. Basically, it aims to minimize the trace of the inverse of the information matrix $P_B P_B^\top$, similarly to our first algorithm. However, the selection is done in a forward manner: starting with an empty set, and greedily adding the user whose contribution is the highest. This approach was successfully practiced in many fields of active learning, rather than only for the item cold-start problem of CF-based recommenders (see \cite{krause2007near} for a more comprehensive survey). However, we are not aware of any performance guarantee for this algorithm. \\

\vspace{-0.14cm}

\textit{\textbf{Clustering (B2).}}
 Intuitively, having a heterogeneous set of users will result in a more objective ratings of the item. To this end, we suggest to cluster the users with respect to their representation in the Euclidean LFM space and subsequently sample from the resulting clusters. For the clustering task, we use the standard $k$-means algorithm with cosine similarity as its similarity measure.

We propose two clustering-based baselines for the selection problem, each differs in the number of clusters and the sampling procedure: (1) split  the pool of available users into $B$ clusters, and select the users 
that are closest to the cluster center of mass (one user per cluster); and (2) split the pool of available users into $c$ clusters (where $c<B$), and from each cluster select number of users at random and proportionally to the cluster size. For this baseline, the value of $c$ is determined empirically.\\

\vspace{-0.14cm}

\textit{\textbf{Random selection (B3).}}
In this baseline, the $B$ users are selected randomly  from the pool of available users. 
At first glance, this baseline seems rather weak. However, if $B$ is sufficiently large, then $\mathcal{U}^i_B $ is a statistically good representative of $ \mathcal{U}^i$, and this baseline cannot be easily beaten. Moreover, this baseline has many practical appeals and is thus very popular with state-of-the-art recommenders.\\

\vspace{-0.14cm}

\textit{\textbf{Frequent raters (B4).}}
Here we select users who tend to provide many ratings. This baseline utilizes the following conjecture: users with extensive rating history have better modeling in CF-based recommender systems. Therefore, preferring these users over users that have limited rating history seems reasonable in our setting. \\

\vspace{-0.14cm}

\textit{\textbf{Edgy raters (B5).}}
This baseline selects users that provide diverse ratings, or more accurately, ratings with large variance. Intuitively, these users provide more information regarding an item they like or dislike, than users that tend to rate all items in a similar manner. \\

\vspace{-0.14cm}

\textit{\textbf{Early birds raters (B6).}}
This baseline slightly differs from the previous baselines: instead of actively selecting users to rate the new item, we 
conceptually invite all users to opine on the new item at their convenience, and  
consider the (chronologically) first $B$ returned ratings as our selection. The intention here is not to offer another active selection baseline, but rather to study the utility of early reviews, coming from people who are enthusiastic about trying out the new item, for modeling the rest of the population.

\subsection{Results}
We separately report the results for each of the considered problems, starting with the rating prediction problem and moving on to the user selection problem.

\subsubsection{Rating Prediction Comparison}
We compare the performance of the least squares estimator and the similarity based estimator.
To this end, we select 300 movies from the set $\mathcal{N}$, arbitrarily.  For each of these movies, we randomly select $B$ users ($B=2,4,\ldots,100$), and apply both estimators to generate predictions for the remaining users. The prediction accuracy is measured using the RMSE metric over all predictions (and not separately for each movie).
This routine is averaged over 50 runs to ensure stability. 
\begin{figure}[tb] 
  \centering
    \includegraphics[width=0.45 \textwidth]{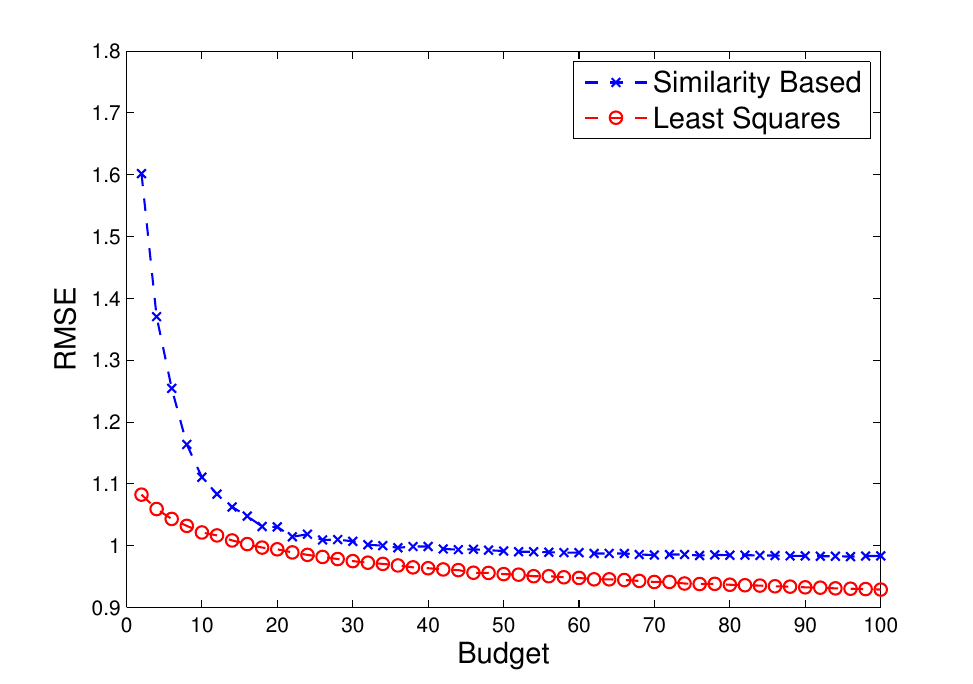}
     \caption{\textbf{Rating prediction comparison.} }
          \label{fig1}
\end{figure}

As evident in Figure \ref{fig1}, the least squares estimator outperforms the considered baseline. 
Qualitatively, similar results were obtained when applying the other (non-random) selection schemes of $B$ users, and are thus omitted here. 

\subsubsection{User Selection Comparison}
Here the performance of our approach is compared with the baselines (presented in Section \ref{baseline}) for \textbf{Problem B}. To carry out  the experiment, we select 300 movies from the set $\mathcal{N}$, arbitrarily.  For each of these movies, we apply all six baselines (B1-B6) and our two algorithms (denoted by BGS1 and BGS2) to select a subset of $B$ users. We then generate $( \tilde{b}_i , \tilde{Q}_i )$ using the least squares estimator (or the generalized least squares estimator for BGS2), regardless of how the selection was made. Since this estimator outperforms the similarity based estimator for any selection of $B$ users (as demonstrated in Figure \ref{fig1}), we do not lose information by omitting them. Here again, we measure the prediction accuracy using the RMSE metric over all predictions. For the non-deterministic baselines: Clustering (B2) and Random (B3), we average the results over 50 runs  to ensure stability.

\begin{figure*}[tb]
  \centering
    \includegraphics[width= \textwidth]{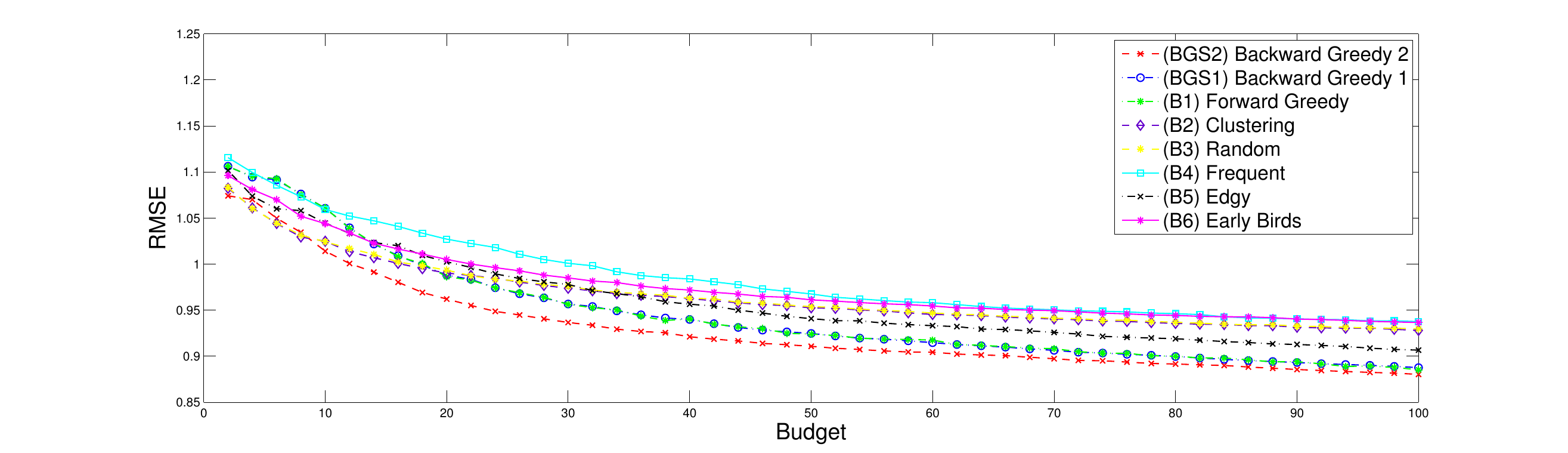}
     \caption{User selection comparison.} 
          \label{fig2}
\end{figure*}

As can be seen in Figure \ref{fig2}, the optimal design based algorithms outperform the other baselines for almost any budget. 
In particular, BGS2 (which  accounts for independent yet not identically distributed noises) significantly
\footnote{The results are significant at any significance level, since we average them over $\left| \mathcal{N} \right| = 300$ movies.} 
surpasses BGS1 and B1 (which are practically indistinguishable) for any budget. The second best performing baseline is B5, which utilizes the ratings of the ``edgy'' users. We conjecture that such users provide more information regarding a new item, in addition to being well-scattered in the LFM space. Baselines B2 and B3 (Random and Clustering\footnote{We present here only the better performing clustering technique, which is the second one discussed in Section \ref{baseline}.}) seem to perform similarly in our setting, yet a closer look discovers that by clustering the users in the LFM space we get more stable results (in the sense of lower variance of the resulted RMSE). We remind the reader that our basic CF model was trained using state-of-the-art algorithms, where every percent of improvement is not easily achieved.  
%


\vspace{\secSpace}
\section{Conclusions and Future Work}
\label{sec:conc}

In this work we started with a mature LFM model and tried to tackle the item cold-start problem by answering two questions: (a) given ratings of $B$ users, how can we estimate other users' ratings of an arbitrary new item without retraining the model? and (b) how to choose the set of $B$ users? We showed that in order to get approximately optimal results in terms of mean squared error, the above questions should be tackled jointly. In particular, we applied optimal design techniques and devised two greedy algorithms that achieve that goal under certain assumptions. We used the Netflix dataset to demonstrate the superiority of the proposed algorithms over a set of previously considered and non-considered baselines.

Our work can be seen as a ``one-shot'', non-adaptive active learning scheme that selects the $B$ users at once. This leaves three interesting variants for future work.
The first variant we consider is an {\em adaptive active learning} scheme, in which users are chosen sequentially, with each selection done only after receiving the ratings of the previously chosen user.
The second variant tackles an {\em online} version of the problem, where potential users arrive one by one and the publisher must decide whether to assign them the item or not. In several recommendation settings, e.g. news recommendations, new items constantly arrive and have very short lifetimes, thereby requiring that the publisher identifies quickly which users to recommend them to. 
The tradeoff between (a) selecting a good set of users, and (b) quickly completing the reviewing process of the item, presents a challenging optimization problem. The third variant discusses a {\em multi-item} setting, where the publisher must concurrently explore several new items while users can review only one or a small fixed number of items. Note that the multi-item setting can be investigated in both the offline and the online settings.

In addition to the above variants, an interesting direction would be to study the user cold-start problem using the techniques we presented in this work. As discussed in the related work section, the user and the item cold-start problem are essentially different, mainly due to the ability to adaptively interview new users in order to bootstrap their modeling. Whereas this enables decision tree type solutions for user cold-start handling (which are currently the state-of-the-art), such solutions cannot be implemented in the setting considered in this paper. However, applying the optimal design approach to the user cold-start problem is possible and may potentially lead to better theoretical and empirical guarantees. 


\bibliographystyle{plain}
\bibliography{itemColdStart}  
%
%
\newpage
\appendix
\section{Proofs} \label{app}

\subsection{Proof of Lemma \ref{TRACE}}
\label{app: proof of lemma 1}
%
Let $i$ be a new item, and $\mathcal{U}^i $ its corresponding set of available users. 
Now, given a set of $B$ users $ \mathcal{U}_{B}^i \subset  \mathcal{U}^i $, we can estimate $ ( b_i , Q_i ) $ using the least squares estimator:
\begin{align}   \label{analytic_solution}  
 \big( \tilde{b}_i , \tilde{Q}_i \big) 
& = \left(  \sum_{ v \in \mathcal{U}^i_B } {P}^\prime_v {P}_v^{\prime \top} \right)^{-1} \left( \sum_{ v \in \mathcal{U}^i_B } (r_{vi} - b_v - \mu )   {P}_v^{\prime} \right)   \nonumber\\
& =  \left( P_B P_B^{ \top} \right)^{-1}  P_B r_B .
\end{align}
Notice that according to Equation \eqref{eq:LFM_ground} we can also express the vector $r_B$ as follows:
$$ r_B  =   {P}_B^\top  \big( b_i , Q_i \big)   + \varepsilon_{B} ,$$
and by substituting the above in Equation \eqref{analytic_solution} and shifting sides we get that
\begin{equation} \label{dist}
  \big(  \tilde{b}_i , \tilde{Q}_i \big) -  \big( b_i , Q_i  \big)  =    \left( P_B P_B^{ \top} \right)^{-1} P_B   \varepsilon_B .
\end{equation}
We denote by $\text{MSE} (  \mathcal{U}^i_B ) $  the resulted MSE value by using the ratings of the users within $  \mathcal{U}^i_B $ in our prediction model, that is, $ \text{MSE} \big(  \mathcal{U}^i_B \big) = \frac{1}{\left| \mathcal{U} \right|} \sum_{u \in  \mathcal{U}  } \left( \tilde{r}_{ui} - {r}_{ui} \right)^2 $.

Then, we can derive:
\begin{align*} 
\E \Big[ \text{MSE} \big(  \mathcal{U}^i_B \big) \Big]  &   = \frac{1}{\left| \mathcal{U} \right|} \E \left[ \sum_{u \in  \mathcal{U}  } \left( \tilde{r}_{ui} - {r}_{ui} \right)^2 \right]   \nonumber\\
& = \frac{1}{\left| \mathcal{U} \right|}  \E \left[  \left\| P^\top \left( ( \tilde{b}_i , \tilde{Q}_i)  - ( {b}_i , {Q}_i) \right)  + \varepsilon_{\mathcal{U}} \right\| _2^2 \right] \nonumber\\
& \stackrel{(1)}{=} \frac{1}{\left| \mathcal{U} \right|} \E \Bigg[  \left\|  P^\top  \left( P_B P_B^{ \top} \right)^{-1} P_B   \varepsilon_B  +  \varepsilon_{\mathcal{U}} \right\| _2^2 \Bigg] \nonumber\\
&  \stackrel{(2)}{=} \frac{1}{\left| \mathcal{U} \right|} \E \Bigg[  \left\| P^\top  \left( P_B P_B^{ \top} \right)^{-1} P_B   \varepsilon_B   \right\| _2^2 + \left\| \varepsilon_{\mathcal{U}} \right\| _2^2  \Bigg] ,
\end{align*}
where the expectation is taken over the noise $\eps_{vi}$ and $\eps_{ui}$. Equality (1) follows by substituting Equation \eqref{dist}; equality (2) holds since $ \mathbb{E} \left[ \varepsilon_{ui} \right] =0 $ for all $ u \in \mathcal{U} $, and also since $\varepsilon_B$ and $\varepsilon_\mathcal{U}$ are independent.

Next, we  use the assumption  $ \mathbb{E} \left[ \varepsilon_{ui}^2 \right] = \sigma^2 $  to get:
\begin{align*} 
\mathbb{E} \left[ \text{MSE} \big(  \mathcal{U}^i_B \big) \right] &   =  \frac{1}{\left| \mathcal{U} \right|} \mathbb{E} \left[  \left\| P^\top  \left( P_B P_B^{ \top} \right)^{-1} P_B   \varepsilon_B  \right\| _2^2 \right]  +  \sigma^2 \nonumber\\
&  \stackrel{(3)}{=} \frac{1}{\left| \mathcal{U} \right|} \left\|    P^\top  \left( P_B P_B^{ \top} \right)^{-1} P_B C_B  \right\|_F^2 +\sigma^2 \nonumber\\
&  \stackrel{(4)}{=}  \sigma^2   \text{tr} \left( \left( P_B P_B^{\top} \right)^{-1}  \right) +\sigma^2  ,
\end{align*}
where $ C_B $ denotes the square root of the covariance matrix of $ \varepsilon_{B} $; equality (3) follows from the definition of the Frobenius norm $ \|   \|_F $, and equality (4) holds since $C_B = \sigma^2   I_{B \times B}$, and since we assume that $ P P ^\top = \left| \mathcal{U} \right|   I_{(k+1) \times (k+1)} $. 

Thus, the lemma is obtained.

\subsection{Proof of Lemma \ref{SECOND}}
\label{app: proof of lemma 4}

%
By taking expectation we have that:
\begin{align*}
\mathbb{E} \left[  \big( \tilde{b}_i , \tilde{Q}_i \big)   \right] &= \mathbb{E} \left[ \left( P_B C_B^{-2} P_B^{ \top} \right)^{-1}  P_B C_B^{-2}  r_B  \right] \\
& = \mathbb{E} \left[  \left( P_B C_B^{-2} P_B^{ \top} \right)^{-1}  P_B C_B^{-2} {P}_B^\top  \big( b_i , Q_i \big)   \right] \\
& \quad + \mathbb{E} \left[ \left( P_B C_B^{-2} P_B^{ \top} \right)^{-1}  P_B C_B^{-2}  \varepsilon_{B} \right] \\
& =   \big( {b}_i , {Q}_i \big)  +  \left( P_B C_B^{-2} P_B^{ \top} \right)^{-1}  P_B C_B^{-2}  \mathbb{E} \big[ \varepsilon_B \big]   \\
& =  \big( {b}_i , {Q}_i \big) ,
\end{align*}
where in the second equality  we used the assumption that $r_B  =   {P}_B^\top  \big( b_i , Q_i \big)   + \varepsilon_{B}$.

\subsection{Proof of Lemma \ref{trace2}}
\label{app: proof of lemma 5}

%
%
%
Recall that as before we assume that
$$ P P ^\top = \left| \mathcal{U} \right|   I_{(k+1) \times (k+1)} \ .$$
Now, let $i$ be a new item, and $ \mathcal{U}_{B}^i \subset  \mathcal{U}^i $ a subset of users assigned for the task of rating it.
Then, we can  substitute $r_B$ in the generalized least squares estimator (Equation \eqref{generalized}) to get the following result:
$$
\big( \tilde{b}_i , \tilde{Q}_i   \big)  - \big( {b}_i , {Q}_i   \big)  =  \left( P_B C_B^{-2} P_B^{ \top} \right)^{-1}  P_B C_B^{-2}  \varepsilon_{B}  .
$$
We substitute the above in the MSE equation, and rewrite our motivation problem:
\begin{align*} 
& \E \left[ \text{MSE} \big(  \mathcal{U}^i_B \big) \right] \\
&   = \frac{1}{\left| \mathcal{U} \right|} \E \left[  \sum_{u \in  \mathcal{U}  } \left( \tilde{r}_{ui} - {r}_{ui} \right)^2 \right]  \nonumber\\
& = \frac{1}{\left| \mathcal{U} \right|}  \E \left[  \left\|  P^\top \left( ( \tilde{b}_i , \tilde{Q}_i)  - ( {b}_i , {Q}_i) \right)  + \varepsilon_{\mathcal{U}} \right\| _2^2  \right] \nonumber\\
& = \frac{1}{\left| \mathcal{U} \right|} \E \left[  \left\| P^\top  \left( P_B C_B^{-2} P_B^{ \top} \right)^{-1}  P_B C_B^{-2}  \varepsilon_{B}  + \varepsilon_{\mathcal{U}} \right\| _2^2\right]   \nonumber\\
& = \frac{1}{\left| \mathcal{U} \right|} \E \left[   \left\|   P^\top  \left( P_B C_B^{-2} P_B^{ \top} \right)^{-1}  P_B C_B^{-2}  \varepsilon_{B}   \right\| _2^2 \right] +  \frac{  \E \left[  \left\| \varepsilon_{\mathcal{U}} \right\| _2^2  \right]  }{ \left| \mathcal{U} \right| } \\
&= \frac{ 1 }{\left| \mathcal{U} \right|}  \mathbb{E} \left[ \left\|   P^\top  \left( P_B C_B^{-2} P_B^{ \top} \right)^{-1}  P_B C_B^{-2}  \varepsilon_{B}   \right\| _2^2 \right] + \frac{1 }{ \left| \mathcal{U} \right| }  \sum_{u \in \mathcal{U} } \sigma_u^2 
 \nonumber\\
& = \frac{1}{\left| \mathcal{U} \right|}  \left\| P^\top  \left( P_B C_B^{-2} P_B^{ \top} \right)^{-1}  P_B C_B^{-1}     \right\|_F^2 +  \frac{1 }{ \left| \mathcal{U} \right| }  \sum_{u \in \mathcal{U} } \sigma_u^2  \nonumber\\
&  =  \text{tr} \left( \left( P_B C_B^{-2} P_B^{\top} \right)^{-1}  \right) +  \frac{1 }{ \left| \mathcal{U} \right| }  \sum_{u \in \mathcal{U} } \sigma_u^2   ,
\end{align*}
where the equalities follow similarly to Lemma \ref{TRACE}.

\end{document}